\journal{Journal of Computing and System Sciences}
\theoremstyle{plain}
\newtheorem{theorem}{Theorem}
\newtheorem{lemma}{Lemma}
\newtheorem{proposition}{Proposition}
\theoremstyle{definition}
\newtheorem{definition}{Definition}
\def\psigam#1{{\psi(#1,\,\gamma(#1))}}
\begin{document}

\begin{frontmatter}


\title{Almost-Natural Proofs}
\author{Timothy Y. Chow}
\ead{tchow@alum.mit.edu}
\address{Center for Communications Research, 805 Bunn Drive, Princeton, NJ
08540}


\begin{abstract}
Razborov and Rudich have shown that so-called
\emph{natural proofs} are not useful for separating $P$ from~$NP$
unless hard pseudorandom number generators do not exist.
This famous result is widely regarded as a serious barrier to proving
strong lower bounds in circuit complexity theory.

By definition, a natural combinatorial property satisfies two
conditions, \emph{constructivity} and \emph{largeness}.
Our main result is that
if the largeness condition is weakened slightly,
then not only does the Razborov--Rudich proof break down,
but such ``almost-natural'' (and useful) properties provably exist.
Specifically, under the same pseudorandomness assumption
that Razborov and Rudich make,
a simple, explicit property that we call \emph{discrimination}
suffices to separate $P/poly$ from~$NP$;
discrimination is nearly linear-time computable and almost large,
having density $2^{-q(n)}$ where $q$ is a quasi-polynomial function.
(This is a slightly stronger result than the one announced
in the FOCS 2008 extended abstract of this paper.)
For those who hope to separate $P$ from~$NP$ using
random function properties in some sense,
discrimination is interesting, because it is constructive,
yet may be thought of as a minor alteration of
a property of a random function.

The proof relies heavily on
the self-defeating character of natural proofs.
Our proof technique
also yields an \emph{unconditional} result,
namely that there exist almost-large and useful properties
that are constructive, if we are allowed to call
\emph{non-uniform} low-complexity classes ``constructive.''
We note, though, that this unconditional result can also be proved
by a more conventional counting argument.

Finally, we give an alternative proof
(communicated to us by Salil Vadhan at FOCS 2008) of one of our theorems,
and we make some speculative remarks on the future prospects for
proving strong circuit lower bounds.
\end{abstract}

\begin{keyword}


circuit lower bound \sep natural proof
\end{keyword}

\end{frontmatter}


\section{Introduction}
\label{sec:intro}

In a famous paper~\cite{Razborov-Rudich},
Razborov and Rudich introduced the concept
of a \emph{natural combinatorial property} of a Boolean function.
They showed on the one hand
that almost all lower bounds in circuit complexity theory
proved up to that time
(specifically, all non-relativizing, non-monotone, superlinear lower bounds)
had employed natural properties,
and on the other hand that natural properties cannot be used
to separate $P$ from $NP$ unless
$2^{n^\epsilon}$-hard pseudorandom number generators do not exist.
Their result is widely regarded as a serious barrier to proving
strong circuit lower bounds.

In more detail, if $\Gamma$ and $\Lambda$ are complexity classes,
then Razborov and Rudich say that
a property of Boolean function on $n$ variables is
\emph{$\Gamma$-natural of density~$\delta_n$
and useful against~$\Lambda$}
if (roughly speaking) the property is $\Gamma$-computable
(from the truth table of a given Boolean function),
if it holds for $2^{2^n} \delta_n$ Boolean functions,
and if it contains no $\Lambda$-computable Boolean functions.
They showed that if $\Gamma=\Lambda= P/poly$
and $\delta_n = \Omega(2^{-poly(n)})$,
then no such properties exist unless
$2^{n^\epsilon}$-hard pseudorandom number generators do not exist.
Informally, if a property is \emph{constructive}
($\Gamma$ is sufficiently weak) and \emph{large}
($\delta_n$ is sufficiently large),
then it is not likely to be useful for proving strong circuit lower bounds.

It follows that if we believe in hard pseudorandom number generators
but still wish to prove circuit lower bounds,
then we are led to ask just \emph{how} non-constructive and/or small
a property needs to be in order to circumvent the
so-called ``naturalization barrier.''
Rudich~\cite{Rudich} has shown that if
we allow ourselves to assume a stronger pseudorandomness hypothesis,
then the naturalization barrier remains intact even if
constructivity is weakened to $N\tilde P/qpoly$-constructivity.
On the other hand, as pointed out by a referee of
the FOCS 2008 extended abstract of this paper,
for any fixed~$k$ there are properties computable in time $2^{n^{k+1}}$
that are useful against circuits of size $n^k$
(simply use brute-force search).

In this paper we investigate the weakening of the largeness condition.
The main result is that under the same
$2^{n^\epsilon}$-hard pseudorandomness assumption of
the original Razborov--Rudich paper,
we can explicitly exhibit a nearly-linear-natural
property that separates $NP$ from $P/poly$ and whose density is $2^{-q(n)}$
where $q$ is a quasi-polynomial function
(whose degree depends on $\epsilon$ and on the
size of the pseudorandom number generator).
Of course, the pseudorandomness hypothesis trivially implies
the existence of constructive properties
that separate $NP$ from $P/poly$;
for example, simply take an explicit family of
$NP$-complete Boolean functions.
However, this latter family has density $2^{-e(n)}$
for some function $e(n)$ that grows exponentially;
this is far smaller than~$2^{-q(n)}$.

The main idea of our proof is to exploit
the \emph{self-defeating nature} of natural proofs.
Assume that natural, useful properties do not exist
(for example, by assuming
that $2^{n^\epsilon}$-hard pseudorandom number generators
exist and invoking Razborov--Rudich).
This means that every attempt to find a natural property
that discriminates nonconstructive functions from constructive ones fails.
The key observation is that
\emph{a natural property
is itself just a constructive function}
(a constructive function of a truth table, that is,
but a truth table is just an arbitrary binary string).
Therefore we have identified a feature
that every constructive function has:
It is no good at discriminating nonconstructive functions from
constructive ones.
So if we consider the property of \emph{discrimination},
i.e., the ability to distinguish nonconstructive functions from
constructive ones,
then \emph{discrimination is a useful property.}
On the other hand,
it is easy to prove unconditionally that
discrimination is almost large,
and that discrimination is constructive.\footnote{Note
that even if every \emph{discriminating function} is non-constructive,
the \emph{property of discrimination} is constructive,
because it is easy to check, given the truth table of a function~$f$,
whether $f$ is a discriminating function.}
Moreover, one can explicitly describe an $NP$ function
that is discriminating,
so discrimination separates $NP$ from $P/poly$.
This is our main result.

The key point is that it is the very assumption that natural, useful
properties do not exist that yields a useful property.

One can ask whether the above line of reasoning can be used to
prove an \emph{unconditional} result,
just as Avi Wigderson adapted Razborov and Rudich's argument
to prove unconditionally that there is no natural proof
that the discrete logarithm problem is hard.
Indeed, this is possible, as we show below.
It turns out, however, that this unconditional result
can be proved using a direct counting argument.

We hope that these results will give some insight into
how to bypass the naturalization barrier.
If $2^{n^\epsilon}$-hard pseudorandom number generators do not exist,
then of course the naturalization barrier evaporates.
On the other hand, if such generators \emph{do} exist,
then our results show that there exists at least
one property (namely, discrimination) that separates $NP$ from $P/poly$
and that is both constructive and---as we shall see shortly---only
a minor alteration of a random property.

\section{Preliminaries}
\label{sec:background}

We write $\mathbb{N}$ for the positive integers,
and our logarithms are always base~$2$.
All gates in our Boolean circuits are assumed to have just two inputs.
We use the notation $(x_n)$ to denote a sequence $x_1, x_2, \ldots\,$,
and whenever we refer to a sequence $(f_n)$ of Boolean functions,
we always understand that $f_n$ is a function of~$n$ variables.
Given a function $\lambda:\mathbb{N}\to\mathbb{N}$,
we write $SIZE(\lambda)$ to denote the
complexity class comprising all sequences $(f_n)$
of Boolean functions for which there exists a constant~$c$
such that the minimum circuit size of~$f_n$ is at most $c\lambda(n)$
for all sufficiently large~$n$.
The following definition will also be convenient.
\begin{definition}
Given two functions $\gamma:\mathbb{N}\to\mathbb{N}$
and $\lambda:\mathbb{N}\to\mathbb{N}$,
we say that \emph{$\gamma$ outstrips~$\lambda$}
if for every constant $c>0$ there exists $n_0$ such
that $\gamma(n) > c\lambda(n)$ for all $n\ge n_0$.
That is, $\gamma$ eventually grows \emph{strictly} faster than
any constant times~$\lambda$.
\end{definition}

Now let us review some fundamental concepts from~\cite{Razborov-Rudich}.

\begin{definition}
A \emph{Boolean function property} (or just \emph{property} for short)
is a sequence $C = (C_n)$ where each $C_n$ is a set of
Boolean functions on $n$ variables.
\end{definition}

\begin{definition}
If $\Gamma$ is a complexity class and
$(\delta_n)$ is a sequence of positive real numbers,
then a property $(C_n)$ is
\emph{$\Gamma$-natural with density~$\delta_n$} if
\begin{enumerate}
\item (largeness) $|C_n| \ge 2^{2^n} \delta_n $
for all sufficiently large~$n$;
and
\item (constructivity)
the problem of determining whether $f_n \in C_n$,
given as input the full truth table of
a Boolean function $f_n$ on $n$ variables,
is computable in~$\Gamma$.
\end{enumerate}
\end{definition}

Note that our definition of \emph{natural}
differs slightly from that of Razborov and Rudich;
for them, a natural property is one which \emph{contains}
a large and constructive property.
This difference will do no harm,
because our results assert
the \emph{existence} of certain natural properties in our sense,
and a property that is natural in our sense is also natural in
Razborov and Rudich's sense.

Later on we will be particularly interested in the case
of \emph{nearly-linear-natural} properties,
which we define to mean $\Gamma = DTIME(N (\log N)^c)$
for some constant~$c$.
Here we have used an uppercase~$N$
to emphasize that ``nearly linear''
means nearly linear in~$N = 2^n$, the size of the truth table of~$f_n$.

Next we recall the definition of a \emph{useful} property.

\begin{definition}
If $\Lambda$ is a complexity class, then
a property $(C_n)$ is \emph{useful against $\Lambda$}
if for every sequence $(f_n)$ of Boolean functions satisfying
$f_n \in C_n$ for infinitely many~$n$, $(f_n) \notin \Lambda$.
\end{definition}

For our purposes we also need a slightly weaker notion,
which we shall call \emph{quasi-usefulness}.

\begin{definition}
If $\Lambda$ is a complexity class, then
a property $(C_n)$ is \emph{quasi-useful against $\Lambda$}
if for every sequence $(f_n)$ of Boolean functions satisfying
$f_n \in C_n$ for all sufficiently large~$n$, $(f_n) \notin \Lambda$.
\end{definition}

The difference between usefulness and quasi-usefulness is that
there may be infinitely many~$n$
for which a quasi-useful property
is easy to compute, whereas this cannot happen for a useful
property.\footnote{As pointed out by a referee,
our distinction between \emph{useful} and \emph{quasi-useful}
is the same as the distinction
between \emph{diagonalization a.e.}\ and \emph{diagonalization i.o.}\
in~\cite{Regan-Sivakumar-Cai}.}
However, a quasi-useful property retains the important characteristic
of not containing any $\Lambda$-computable sequence of Boolean functions.
So for the purpose of separating $\Lambda$ from a higher complexity class,
quasi-usefulness suffices.

Note that the only reason we introduce quasi-usefulness is
to handle the slightly annoying technicality that
the length of a truth table is not an arbitrary integer
but is always a power of two.
An alternative way around this technicality might be to
pad out strings whose lengths are not powers of two.

\begin{definition}
Fix $\epsilon>0$.
A family of functions $G_n:\{0,1\}^n \to \{0,1\}^{2n}$ is a
\emph{$2^{n^\epsilon}$-hard pseudorandom number generator} if
for every circuit~$C$ with fewer than $2^{n^\epsilon}$ gates,
\[
\left| \mathrm{Prob}[C(G_n(\mathbf{x})) = 1] - 
  \mathrm{Prob}[C(\mathbf{y}) = 1] \right| < 1/2^{n^\epsilon}.
\]
Here $\mathbf{x}$ is chosen at random from $\{0,1\}^n$
and $\mathbf{y}$ is chosen at random from $\{0,1\}^{2n}$.
\end{definition}

We are now ready for Razborov and Rudich's fundamental result.
We need a slightly stronger version of the theorem than
the one that appears in their paper.

\begin{theorem}[Razborov--Rudich]
\label{thm:RR}
Fix $c\ge 1$, $d>1$, and $\epsilon > 0$.
Assume that there exists a $2^{k^\epsilon}$-hard
pseudorandom number generator~$G_k$ in $SIZE(k^{c})$.
Then for any $e> 1 + cd/\epsilon$,
there is no $P/poly$-natural property
with density greater than~$2^{-n^d}$
that is useful against $SIZE(n^e)$.
\end{theorem}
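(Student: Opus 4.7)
The plan is to argue by contradiction. Suppose $C=(C_n)$ is a $P/poly$-natural property of density greater than $2^{-n^d}$ that is useful against $SIZE(n^e)$; I will use $C$ together with $G_k$ to manufacture a statistical test violating the assumed $2^{k^\epsilon}$-hardness of $G_k$. The main construction is a Goldreich--Goldwasser--Micali style pseudorandom function $F_s\colon\{0,1\}^n\to\{0,1\}$: grow a complete binary tree of depth~$n$ whose root carries a $k$-bit label $s\in\{0,1\}^k$ and whose internal nodes apply $G_k$ to their label to produce two $k$-bit child labels; the first bit of each leaf label is the corresponding entry of the truth table of~$F_s$.

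I would tie the seed length to~$n$ by choosing $k$ to be slightly larger than $(n^d+n)^{1/\epsilon}$; since $d>1$, this is essentially $n^{d/\epsilon}$. A direct gate count shows that $F_s$ has a circuit of size $O(n\cdot k^c)=O(n^{1+cd/\epsilon})$, so the hypothesis $e>1+cd/\epsilon$ puts every $F_s$ in $SIZE(n^e)$. Usefulness of $C$ then forces $F_s\notin C_n$ for every seed~$s$ and every sufficiently large~$n$, while largeness guarantees that a uniformly random Boolean function lies in $C_n$ with probability greater than $2^{-n^d}$. Hence the predicate ``is $f$ in $C_n$?,'' applied to a truth table, distinguishes the distribution of $F_s$-truth-tables (for uniform~$s$) from the uniform distribution on $\{0,1\}^{2^n}$ with advantage greater than $2^{-n^d}$; by constructivity of~$C$, this test is computed by a circuit of size $2^{O(n)}$ on its input of length $N=2^n$.

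The final step is a standard hybrid argument across the $2^n-1$ internal $G_k$-invocations of the GGM tree. Replacing one PRNG output at a time by a uniformly random string, and averaging over the $2^n-1$ positions, yields a $G_k$-distinguisher of size $2^{O(n)}+O(2^n k^c)$ and advantage at least $2^{-n^d}/2^n = 2^{-(n^d+n)}$. The calibration of~$k$ is designed so that $k^\epsilon\ge n^d+n$, giving advantage at least $2^{-k^\epsilon}$; since $d>1$, the size term is $o(2^{k^\epsilon})$ as well. This contradicts the $2^{k^\epsilon}$-hardness of~$G_k$ and completes the argument.

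The central obstacle is the simultaneous calibration of $k$: the seed length must be small enough that (i)~the GGM circuit $F_s$ stays inside $SIZE(n^e)$, which requires $k^c\lesssim n^{e-1}$, yet large enough that (ii)~the $2^n$-factor loss in the hybrid argument is absorbed, i.e.\ $k^\epsilon\gtrsim n^d+n$, and at the same time (iii)~the total size of the PRNG distinguisher (test circuit plus simulation of all unchanged tree nodes) stays below $2^{k^\epsilon}$. The hypothesis $e>1+cd/\epsilon$ together with $d>1$ is precisely what makes all three constraints jointly satisfiable for large~$n$, and it is why the density threshold must be $2^{-n^d}$ with $d>1$ rather than an arbitrary $2^{-\mathit{poly}(n)}$ bound.
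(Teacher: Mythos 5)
Your proposal is correct and follows essentially the same route as the paper's proof: both build a GGM-style pseudorandom function from $G_k$ with the same calibration $k\approx n^{d/\epsilon}$ (the paper writes it as $n=\lfloor k^{\epsilon/d}/2\rfloor$, you write $k\approx(n^d+n)^{1/\epsilon}$, which is the same up to constants), both use usefulness plus largeness to obtain a distinguisher of advantage $>2^{-n^d}$, and both apply a hybrid argument over the $2^n-1$ internal $G_k$-applications and then fix the auxiliary randomness to reach a $2^{O(n)}$-size circuit contradicting $2^{k^\epsilon}$-hardness. The only cosmetic differences are that the paper specifies a concrete labeling scheme for the tree to order the hybrids, and it absorbs the $2^n$ hybrid loss via $2^{-n^d}/2^n\ge 2^{-2n^d}$ rather than your $2^{-(n^d+n)}$, but these choices are interchangeable.
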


\begin{proof}
Only minor changes to Razborov and Rudich's argument are needed,
but for completeness we give a full proof.

Choose any $e > 1 + cd/\epsilon$.
We use our pseudorandom \emph{number} generator~$G$
to construct a pseudorandom \emph{function} generator~$f$.
For every $k\ge 1$,
let $G_k^0, G_k^1 : \{0,1\}^k \to \{0,1\}^k$ be the first and last $k$ bits
of $G_k$ respectively.
For the rest of the proof,
we will write $n$ for $\lfloor k^{\epsilon/d}\!/2 \rfloor$.
For any $k$-bit string~$x$,
let $f(x)$ be the Boolean function that sends $y \in \{0,1\}^n$ to
the first bit of
\begin{equation*}
G_k^{y_n} \circ G_k^{y_{n-1}} \circ \cdots \circ G_k^{y_1}(x).
\end{equation*}
We claim that the family of functions $\{f(x)\}$ is in $SIZE(n^e)$.
This is because
$G_k$ is in $SIZE(k^c) \subseteq SIZE(n^{e-1})$,
and it is straightforward to build a circuit for $f(x)$
using $n$ copies of~$G_k$ (with the $i$th bit of the input dictating
which half of the $i$th copy of $G_k$ to use).

Now assume towards a contradiction that there exists
a $P/poly$-natural property $(C_n)$ with density at least~$2^{-n^d}$
that is useful against $SIZE(n^e)$.
Then for all sufficiently large~$k$,
none of the functions $f(x)$ are in~$C_n$.
Therefore if $\mathbf{f}_n$ denotes a randomly chosen
Boolean function on $n$ variables and
$\mathbf{x}$ denotes a randomly chosen $k$-bit string, then
\begin{equation}
\label{eq:randtest}
\left| \mathrm{Prob}[C_n(f(\mathbf{x})) = 1] - 
  \mathrm{Prob}[C_n(\mathbf{f}_n) = 1] \right| \ge 2^{-n^d}.
\end{equation}

Equation~(\ref{eq:randtest})
gives us a statistical test for $f(\mathbf{x})$,
which we now convert into a statistical test for~$G_k$.
Let $T$ be a binary tree of height~$n$,
having $2^n-1$ internal nodes and $2^n$ leaves.
Construct a labeling $\ell$ of the nodes of~$T$
by labeling the leaves with (distinct) $n$-bit binary strings
and labeling the internal nodes with (distinct) numbers $1$ to~$2^n-1$
in such a way that
if $u$ and~$v$ are internal nodes and $u$ is a child of~$v$,
then $\ell(u)<\ell(v)$.
If $y$ is a leaf of~$T$,
then let $\ell(y)(j)$ denote the $j$th bit of~$\ell(y)$.
For $i\in\{0,1,\ldots, 2^n-1\}$,
let $T_i$ be the subforest of~$T$ induced by
the set of internal nodes~$v$ with $\ell(v)\le i$,
together with all the leaves.
If $y$ is a leaf of~$T$, then let $v_i(y)$ be
the root of the subtree of~$T_i$ containing~$y$,
and let $h(i,y)$ be the distance between $v_i(y)$ and~$y$
(so for example $h(i,y)=0$ if $v_i(y) = y$).

Now define independent random variables $\mathbf{x}(v)$,
one for each node $v$ of~$T$,
and each chosen uniformly from $\{0,1\}^k$.
Define a random collection $\mathbf{f}_{i,n}$ by
letting $\mathbf{f}_{i,n}(y)$ (for a leaf~$y$ of~$T$) be the first bit of
\begin{equation*}
G_k^{\ell(y)(n)} \circ G_k^{\ell(y)(n-1)} \circ \cdots \circ
  G_k^{\ell(y)(n - h(i,y) + 1)}(\mathbf{x}(v_i(y))).
\end{equation*}
Then $\mathbf{f}_{0,n}$ is $\mathbf{f}_n$ and
$\mathbf{f}_{2^n-1,n}$ is $f(\mathbf{x})$,
so Equation~(\ref{eq:randtest}) implies that for some~$i$,
\begin{equation}
\label{eq:bias}
\left| \mathrm{Prob}[C_n(\mathbf{f}_{i-1,n})) = 1] - 
  \mathrm{Prob}[C_n(\mathbf{f}_{i,n}) = 1] \right|
  \ge 2^{-n^d}\!/2^n \ge 2^{-2n^d},
\end{equation}
since $d\ge 1$.
There must exist some assignment of the values of the $\mathbf{x}(v)$
for all roots~$v$ of subtrees of $T_i$
except the root $u$ with $\ell(u)=i$,
such that Equation~(\ref{eq:bias}) still holds
when conditioned on this assignment.
By fixing such an assignment, we obtain
a statistical test that distinguishes between
$G_k(\mathbf{x}(u))$ and $(\mathbf{x}(u'), \mathbf{x}(u''))$,
where $u'$ and $u''$ are the children of~$u$,
and that can be computed by circuits of size $2^{O(n)}$
(because $(C_n)\in P/poly$).
But this contradicts the $2^{k^\epsilon}$-hardness of~$G_k$,
because for all sufficiently large~$k$,
$k^\epsilon$ is larger than $2n^d$ and also larger
than any constant times~$n$.
\end{proof}

Finally, we need some estimates for the size of $\psi(n, g)$,
the number of Boolean functions of $n$ variables
that can be computed by Boolean circuits with at most $g$~gates.
The upper bound is due essentially to Shannon,
though the version we quote here is Lemma~2.1
in~\cite{Savicky-Woods}.

\begin{proposition}
\label{prop:shannon}
For all $n\ge1$ and $g\ge 1$,
$\psi(n, g) < g^g e^{g+4n}$.
\end{proposition}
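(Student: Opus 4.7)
The plan is to apply the classical Shannon counting argument. Any circuit with at most $g$ binary gates on $n$ input variables can be described, after fixing a topological ordering of its gates, by the following data: (i) a designated output gate; (ii) for each gate, one of the sixteen binary Boolean gate types; and (iii) for each gate, two input wires, each coming either from one of the $n$ input variables or from the output of a lower-numbered gate. A direct count of these descriptions yields at most
\[
g \cdot 16^g \cdot \prod_{i=1}^{g} (n+i-1)^2
\]
distinct labeled circuits, and hence at most this many distinct functions.

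Next I would exploit the fact that relabeling the gates of a circuit (subject to topological consistency) does not change the function it computes, so that each function is counted at least $g!$ times. Dividing by $g!$ and then applying Stirling's lower bound $g! \ge (g/e)^g$ replaces the $1/g!$ by $e^g/g^g$, which converts one factor of $g^g$ from the remaining $(n+g)^{2g}$ into the desired leading $g^g$. To complete the estimate, I would use the elementary inequality $(1+n/g)^g \le e^n$, which immediately gives $(n+g)^{2g} \le g^{2g}\, e^{2n}$, so that the product collapses to an expression of the form $g^g \cdot e^{g + O(n)}$.

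The main obstacle is bookkeeping of constants: the $16^g$ factor coming from the gate-type count, together with the polynomial prefactor $g$ and the sub-exponential corrections from Stirling, must all be absorbed into the exponent of $e$ without disturbing the leading $g^g$. The slack afforded by writing $4n$ rather than $2n$ in the exponent is precisely what allows this absorption, and carrying out this estimation carefully is the content of Savicky and Woods's Lemma~2.1, which we invoke for the final numerical verification.
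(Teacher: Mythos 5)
The paper does not actually prove Proposition~\ref{prop:shannon}: it quotes the bound, attributes it essentially to Shannon, and cites Lemma~2.1 of Savick\'y--Woods for the precise form. Your proposal sketches the Shannon counting argument but then also defers to Savick\'y--Woods for ``the final numerical verification,'' so at the level of what is actually established both you and the paper are relying on the same citation. In that sense there is no paper-proof to be compared against.

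That said, two steps in your sketch would not survive if you tried to make it self-contained. First, the division by $g!$ is not justified by saying that relabelings ``subject to topological consistency'' fix the function: a circuit whose underlying DAG is a path has exactly one topological ordering, not $g!$, so under the restricted description scheme $g\cdot 16^g\prod_{i=1}^g(n+i-1)^2$ each function is counted at least once, not at least $g!$ times. To earn the $g!$ factor one must instead count descriptions with \emph{unrestricted} (possibly cyclic) wiring, i.e.\ $g\cdot\bigl(16(n+g)^2\bigr)^g$, and then argue that every function has some minimal, non-redundant circuit whose automorphism group is trivial, so that its $g!$ gate-relabelings give $g!$ distinct wiring descriptions; this is the standard but non-trivial part of the Shannon argument that your one-line appeal glosses over. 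Second, and more seriously, the claim that the slack between $e^{2n}$ and $e^{4n}$ absorbs the $16^g$ factor is simply false: $16^g$ is exponential in $g$, not in $n$, and in the regime the paper cares about one has $g\approx 2^n/(4n)\gg n$, so no $e^{O(n)}$ term can swallow it. Carrying your estimates through yields roughly $g^g\,16^g\,e^{g+2n}$, which exceeds $g^g e^{g+4n}$ whenever $16^g>e^{2n}$, i.e.\ already for $g$ a modest constant multiple of $n$. The bound as stated is therefore not provable by the naive full-basis count; the Savick\'y--Woods lemma must be doing something sharper (e.g.\ working over a smaller gate basis, or a tighter encoding). Invoking their lemma is fine --- the paper does the same --- but the intermediate claim that your sketch ``just needs bookkeeping'' to close is not correct.
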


For the proof of Theorem~\ref{thm:counting},
we will also need a lower bound on $\psi(n,g)$.
This result is somewhat technical and is not needed
for the proofs of Theorem~\ref{thm:main} or Theorem~\ref{thm:nonunif},
so the reader can skip to Section~\ref{sec:main} now
without loss of continuity, returning to the lemmas below when needed.

We need a couple of facts about binomial coefficients.
These facts are well known to experts, but for completeness
we give the proofs.
The first fact is an elementary large-deviation result.

\begin{lemma}
\label{lem:largedev}
If $k\le (1/2 - \epsilon)N$, then there is a constant $c > 0$
(depending on $\epsilon$ but not on $N$ or~$k$) such that
\begin{equation}
\label{eq:largedev}
\sum_{i=0}^k \binom{N}{i} \le c \binom{N}{k}.
\end{equation}
\end{lemma}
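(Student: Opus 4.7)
The plan is to use the standard geometric-series comparison of consecutive binomial coefficients. The ratio
\[
\frac{\binom{N}{i-1}}{\binom{N}{i}} = \frac{i}{N-i+1}
\]
is increasing in $i$, so for every $i$ with $1 \le i \le k \le (1/2-\epsilon)N$ it is bounded above by
\[
\frac{k}{N-k+1} \le \frac{(1/2-\epsilon)N}{(1/2+\epsilon)N+1} \le \frac{1/2-\epsilon}{1/2+\epsilon} =: r,
\]
and crucially $r<1$ is a constant depending only on $\epsilon$.

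Having established this, I would iterate the inequality $\binom{N}{i-1} \le r \binom{N}{i}$ to obtain $\binom{N}{k-j} \le r^j \binom{N}{k}$ for all $j \ge 0$ (with the convention that $\binom{N}{i}=0$ for $i<0$, which only improves the bound). Summing the resulting geometric series,
\[
\sum_{i=0}^{k} \binom{N}{i} \;\le\; \binom{N}{k} \sum_{j=0}^{\infty} r^j \;=\; \frac{1}{1-r}\binom{N}{k},
\]
so the constant $c = 1/(1-r) = (1/2+\epsilon)/(2\epsilon)$ works.

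There is no real obstacle here; the only thing to be a little careful about is the edge case where $N$ is small enough that $(1/2+\epsilon)N + 1$ matters, but this is absorbed by the inequality $k/(N-k+1) \le (1/2-\epsilon)/(1/2+\epsilon)$ whenever $k \le (1/2-\epsilon)N$, and for the finitely many small $N$ one can adjust $c$ by a constant factor. Thus the whole proof fits in a few lines.
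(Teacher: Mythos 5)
Your proof is correct and follows essentially the same route as the paper: bound the ratio of consecutive terms $\binom{N}{i-1}/\binom{N}{i} = i/(N-i+1)$ by a constant $r<1$ depending only on $\epsilon$, then sum the resulting geometric series. You go slightly further by making the constant $c = (1/2+\epsilon)/(2\epsilon)$ explicit, but the underlying argument is the same.
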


\begin{proof}
Let $S$ denote the sum on the left-hand side of~(\ref{eq:largedev}).
The ratio between consecutive terms in~$S$
is $i/(N-i+1)$, and since $i\le k \le (1/2 - \epsilon) N$, it follows that
\begin{equation}
\label{eq:estimate}
\frac{i}{N-i+1} \le \frac{(1/2 - \epsilon) N }{(1/2 + \epsilon)N + 1}.
\end{equation}
The right-hand side of~(\ref{eq:estimate})
is bounded by some constant strictly less than one.
Therefore $S$ is bounded by a convergent geometric series,
and this proves the lemma.
\end{proof}

\begin{lemma}
\label{lem:logNchoosek}
Assume that $k\le N/2$.
If $\log \binom{N}{k} \le N/2$, then $k\le N/4$.
\end{lemma}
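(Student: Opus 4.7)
My plan is to prove the contrapositive: if $N/4 < k \le N/2$, then $\log\binom{N}{k} > N/2$. The main tool is the elementary inequality
$$\binom{N}{k} \ge \left(\frac{N}{k}\right)^k,$$
which follows from writing $\binom{N}{k} = \prod_{i=0}^{k-1}(N-i)/(k-i)$ and observing that $(N-i)/(k-i) \ge N/k$ precisely when $N \ge k$, so each factor is at least $N/k$. Taking logarithms gives the cleaner bound $\log\binom{N}{k} \ge k\log(N/k)$.

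Next I would analyze the right-hand side as a function of $k$. Define $f(x) := x\log(N/x)$ on $(0,N)$. A short calculation shows $f(N/4) = (N/4)\log 4 = N/2$ and $f(N/2) = (N/2)\log 2 = N/2$, while the second derivative $f''(x) = -1/(x\ln 2)$ is strictly negative, so $f$ is strictly concave. Concavity therefore forces $f(x) \ge N/2$ on the entire interval $[N/4, N/2]$, and the inequality is strict on the open interval $(N/4, N/2)$. Combined with $\log\binom{N}{k} \ge f(k)$, this yields $\log\binom{N}{k} > N/2$ whenever $N/4 < k < N/2$, and handling the endpoint $k = N/2$ is immediate because $\binom{N}{N/2} \ge 2^N/(N+1)$ is much bigger than $2^{N/2}$ for any non-trivial $N$.

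The main obstacle, and the only place where real care is needed, is the boundary: because the conclusion is $k \le N/4$ (not $k < N/4$), I want the contrapositive to give strict inequality $\log\binom{N}{k} > N/2$ for integer $k$ with $k > N/4$. For non-integer $N/4$ this is automatic from strict concavity, since the smallest admissible $k$ is at least $\lceil N/4 \rceil + 0 > N/4$ and lies strictly inside $(N/4, N/2)$. When $4 \mid N$ and $k = N/4 + 1$, the strict concavity of $f$ together with the strict inequality $\binom{N}{k} > (N/k)^k$ (valid as soon as $k \ge 1$, since at least one factor in the product exceeds $N/k$) gives the needed strict gap.

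Structurally, the lemma is just a disguised statement of the entropy inequality $H(1/4) > 1/2$: the function $f(x)/N$ equals $-(x/N)\log(x/N)$, which is the ``large half'' of the binary entropy $H(x/N)$, and both $H$ and $f/N$ take the value $1/2$ at $x/N = 1/4$ and $x/N = 1/2$. I expect the write-up to be a few lines: cite $\binom{N}{k} \ge (N/k)^k$, then invoke concavity of $f$, then observe $f \ge N/2$ on $[N/4,N/2]$.
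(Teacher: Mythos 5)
Your argument is correct in essence and takes a genuinely different, more elementary route than the paper's. The paper proceeds via Stirling's approximation, lower-bounding $\log\binom{N}{k}$ by $N\cdot H(k/N)$ plus correction terms, then numerically bounding the correction terms to conclude $H(k/N) \le 0.8 < H(1/4) \approx 0.811$, whence $k/N < 1/4$; small $N$ are handled by direct computation. You instead use the much cruder inequality $\binom{N}{k}\ge (N/k)^k$, observe that $f(x)=x\log(N/x)$ takes the value $N/2$ at both endpoints $x=N/4$ and $x=N/2$, and invoke strict concavity to force $f>N/2$ on the interior. This avoids Stirling entirely and makes the role of the function $x\log(N/x)$ (the dominant half of $N\cdot H(x/N)$, as you correctly note) transparent; what it buys is simplicity, at the cost of losing the extra slack that Stirling provides, which is why you must patch the $k=N/2$ endpoint separately. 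The trade is reasonable and the result is the same.

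Two small corrections. First, your parenthetical claim that $\binom{N}{k}>(N/k)^k$ holds ``as soon as $k\ge 1$'' is wrong: at $k=1$ it is an equality, $\binom{N}{1}=N=(N/1)^1$. The strict inequality requires $k\ge 2$. This does not damage the argument, because in the case $4\mid N$, $k=N/4+1$ the strict concavity of $f$ already gives $f(k)>N/2$ (you are at an interior point), so the strict version of the binomial bound is not actually needed there; you only need it, or some substitute, at $k=N/2$, which you handle with the $2^N/(N+1)$ estimate. Second, both your proof and the paper's gloss over the case $N=2$, $k=1$, where $\log\binom{2}{1}=1=N/2$ yet $k=1>N/4$; the lemma as literally stated fails there. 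This is harmless because the lemma is only invoked with $N=2^n$ for large $n$, but it is worth noticing that ``direct computation for small $N$'' does not actually rescue $N=2$, and an honest statement would carry a $N\ge N_0$ hypothesis.
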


As the proof below makes clear,
Lemma~\ref{lem:logNchoosek} remains true if
we replace ``$N/2$'' by ``$(1-\epsilon)N$,''
provided we replace ``$N/4$'' by a suitable constant times~$N$
and require that $N$ be sufficiently large.
We do not need this extra generality,
so we have stated Lemma~\ref{lem:logNchoosek} with specific constants
to make it easier to read.

\begin{proof}
If $k=0$ then the result is trivial, so assume that $k\ne0$.
Let $H(x) := - x \log x  - (1-x)\log (1-x)$ be the entropy function.
The basic reason why the lemma is true
is that $\log \binom{N}{k} \approx N \cdot H(k/N)$.
More precisely, by Stirling's approximation,
\begin{align*}
\log\binom{N}{k} & \ge N \cdot H(k/N) + \frac{1}{2} \log \frac{N}{k(N-k)}
  - \frac{1}{2} \log 2\pi 
   - \left(\frac{1}{12k} + \frac{1}{12(N-k)}\right)\log e \\
  & \ge N \cdot H(k/N) + \frac{1}{2} \log \frac{N}{k(N-k)} - 2.
\end{align*}
So if $\log \binom{N}{k} \le N/2$, then
\begin{equation*}
H(k/N) \le \frac{1}{2} - \frac{1}{2N} \log\frac{N}{k(N-k)} + \frac{2}{N}.
\end{equation*}
The expression $N/k(N-k)$ is minimized when $k=N/2$,
and by elementary calculus we find that
$(1/2x) \log (4/x)$ is minimized when $x = 4e$
(remember that in this paper, our logarithms are base~2).
Therefore, provided $N\ge 10$,
\begin{equation*}
H(k/N) \le \frac{1}{2} + \frac{\log e}{8e} + 0.2 \le 0.8.
\end{equation*}
It follows that if $N\ge 10$, $k/N \le H^{-1}(0.8) \le 1/4$ as desired.
If $N < 10$, then the lemma can be checked by direct computation.
\end{proof}

Now we are ready to prove a lower bound on~$\psi(n,g)$.

\begin{proposition}
\label{prop:psibound}
Let $\gamma:\mathbb{N} \to \mathbb{N}$ be a function
such that $\gamma(n) \le 2^{n-2}\!/n$
and such that $\gamma(n)$ outstrips $n\log n$.
Then for any fixed~$d$,
$\psigam{n} \ge n^d \psi(n, \gamma(n)/2)$
for all sufficiently large~$n$.
\end{proposition}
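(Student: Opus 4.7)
The plan is to sandwich $\log\psi(n,g)$ between tight asymptotic estimates, where I write $g = \gamma(n)$ for brevity. The Shannon-type upper bound of Proposition~\ref{prop:shannon} gives $\log\psi(n,g/2) \le (g/2)\log(g/2) + (g/2 + 4n)\log e = (g/2)\log g + O(g + n)$, and I want to establish the matching lower bound $\log\psi(n,g) \ge g\log g - O(g + n)$; subtracting the two will produce the required gap.

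For the lower bound, the simple minterm construction---every Boolean function with at most $k$ ones can be realized as an OR of $k$ minterms using $kn + O(n+k)$ gates---combined with Lemma~\ref{lem:largedev} gives $\psi(n,g) \ge \binom{N}{\lfloor g/(n+1)\rfloor}$; the hypothesis $g \le 2^{n-2}/n$ keeps the relevant $k$ well below $N/2$, so Lemma~\ref{lem:largedev} applies. By itself this bound is too weak: its logarithm is only of order $2^{n-1}\log n / n^2$ in the worst case, whereas the Shannon upper bound on $\log\psi(n,g/2)$ is of order $2^{n-3}$. To close the gap I would count circuit topologies directly: the number of gate-sequences of length $g$ over a basis of size $B \ge 2$ (with two inputs chosen at step $i$ from the $n+i$ available values) is $B^g\prod_{i=0}^{g-1}(n+i)^2$, and at most $g!$ such sequences describe the same circuit. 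Dividing and applying Stirling, with Lemma~\ref{lem:logNchoosek} handling the residual binomial bookkeeping, yields $\log\psi(n,g) \ge g\log g - O(g + n)$ throughout the range $n \le g \le 2^{n-2}/n$.

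Subtracting the two estimates gives $\log[\psi(n,g)/\psi(n,g/2)] \ge (g/2)\log g - O(g+n)$. Since $\gamma$ outstrips $n\log n$, we have $g \ge cn\log n$ eventually for any constant $c$, hence $\log g \ge \log n$ and $(g/2)\log g \ge (c/2)n\log^2 n$. This dwarfs $O(g+n)$ (itself at most $O(2^{n-2}/n)$, outmatched once $\log g$ is unbounded) and exceeds $d\log n$, yielding $\psi(n,g) \ge n^d\psi(n,g/2)$ for all sufficiently large $n$.

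The main obstacle is establishing the tight lower bound $\log\psi(n,g) \ge g\log g - O(g+n)$: the minterm-based approach alone is insufficient, and one needs either the direct circuit-topology count or an equivalent Lupanov-style junta construction to match Shannon asymptotically. Lemmas~\ref{lem:largedev} and~\ref{lem:logNchoosek} are the binomial-coefficient tools for the supporting estimates---controlling the partial sums $\sum_i\binom{N}{i}$ that arise from the minterm construction and forcing $k \le N/4$ whenever $\log\binom{N}{k} \le N/2$, so that all intermediate binomial parameters stay in the regime where the clean asymptotic estimates apply.
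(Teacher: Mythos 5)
Your plan has a genuine gap in the central step: the lower bound $\log\psi(n,g) \ge g\log g - O(g+n)$. The circuit-topology count you propose cannot deliver it. Counting gate-sequences and dividing by $g!$ gives (at best) an \emph{upper} bound on the number of circuits modulo relabeling, hence an upper bound on $\psi(n,g)$; it says nothing about a lower bound, because many circuits that are not related by any relabeling of gates can still compute the same Boolean function. Indeed this is exactly the argument behind the Shannon bound of Proposition~\ref{prop:shannon}, and reusing it with the inequality flipped has no justification.

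Your fallback, a Lupanov-style junta construction, also does not close the gap as stated. The natural junta construction (functions depending only on the last $t$ coordinates, zero whenever the first $n-t$ bits are nonzero) yields $\psi\bigl(n,\,(1+o(1))2^t/t + O(n)\bigr)\ge 2^{2^t}$, but only at values of the gate budget spaced roughly by factors of two. For a generic $g$ lying just below the next threshold, this yields only $\log\psi(n,g)\ge 2^t \approx \tfrac{1}{2}\,g\log g$, i.e., the constant $1$ in your target bound degrades to $\tfrac12$. But the Shannon upper bound on $\log\psi(n,g/2)$ is itself roughly $\tfrac12 g\log g$, so after subtracting you are left with a quantity of sign controlled by low-order terms, not the large positive gap $(g/2)\log g$ you invoke. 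Note also that the bound $\log\psi(n,g)\ge g\log g - O(g+n)$ is in fact \emph{false} slightly above the range allowed here: at $g=(1+\epsilon)2^n/n$ we have $\psi(n,g)=2^{2^n}$ by Lupanov, yet $g\log g \approx (1+\epsilon)2^n > 2^n$. So any correct version of this inequality has to be rather delicately calibrated, and your sketch does not attempt this.

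The paper's proof avoids tight asymptotics for $\psi$ entirely. It is a local argument: adding $O(n)$ gates lets you flip any single truth-table entry, so the set $B$ of functions computable with $\gamma(n)/2$ gates grows to include its Hamming boundary $G(B)$. Harper's isoperimetric inequality, together with Lemmas~\ref{lem:largedev} and~\ref{lem:logNchoosek} (which there serve to verify that $|B|$ is in the ``small'' regime $\log|B|\le N/2$ so the boundary is a constant fraction of the ball), gives $|G(B)|=\Omega(|B|)$. Hence each batch of $O(n)$ extra gates multiplies the count by a fixed constant $K>1$, and since $\gamma(n)/2$ outstrips $n\log n$ you can afford $\Omega(\log n)$ such batches, giving a factor $K^{\Omega(\log n)}\gg n^d$. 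This is considerably more robust than the asymptotic-sandwich approach: it requires only the Shannon \emph{upper} bound plus an isoperimetric inequality, and never a matching lower bound on $\psi$.
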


\begin{proof}
Let $N= 2^n$.
We are trying to find a lower bound on how many more
Boolean functions we can compute with $\gamma(n)$ gates
than we can compute with only $\gamma(n)/2$ gates.
Our main observation is that by using $O(n)$ extra gates,
we can change any single entry of the truth table of
any given Boolean circuit:
Simply use the $O(n)$ gates to test if the input equals
a specific $n$-bit value, and flip the output of the circuit if it does.

If $B$ denotes the set of truth tables of functions
computable with at most $\gamma(n)/2$ gates,
then our main observation implies that
if we are allowed up to $\gamma(n)/2 + O(n)$ gates, 
then at minimum we can also compute all the functions
on the \emph{boundary} $G(B)$ of~$B$,
i.e., the truth tables whose Hamming distance from~$B$ is~$1$.
We know very little about the structure of~$B$,
but we do have an estimate of its volume,
so we can obtain a lower bound on the size of its boundary
by appealing to a discrete isoperimetric inequality.
In particular, it follows from standard results\footnote{See
for example Bezrukov's survey paper~\cite{Bezrukov}.
Bezrukov states an isoperimetric inequality for
the \emph{inner} boundary~$\Gamma(B)$,
but this can be converted into an inequality for $G(B)$ as follows.
In the notation of Bezrukov's paper,
we may assume that $B$ is an optimal set~$L^N_m$ for some~$m$.
Then the radius-$(k+1)$ Hamming ball $S^N_{k+1}(0) \subseteq B \cup G(B)$,
so if we let $b = \left|B \cup G(B)\right|$, it follows that
as long as $k+1<N/2$,
$$ \left|\Gamma(B \cup G(B))\right| \ge \left|\Gamma\left(L^N_b\right)\right|
   \ge \left|\Gamma\left(S^N_{k+1}(0)\right)\right| = \binom{N}{k+1}.$$
On the other hand, $\Gamma(B \cup G(B)) \subseteq G(B)$ so
$\left|\Gamma(B \cup G(B))\right| \le \left|G(B)\right|$.}
that if we choose $k$ so that
\begin{equation}
\label{eq:B}
\sum_{i=0}^k \binom{N}{i} \le |B| < \sum_{i=0}^{k+1} \binom{N}{i},
\end{equation}
then $|G(B)| \ge \binom{N}{k+1}$.
We claim that there is some constant~$c$
such that $|B| < c\left|G(B)\right|$ for all large~$n$.
To see this, note that since $\gamma(n) \le N/4n$,
Proposition~\ref{prop:shannon} implies that for large~$n$,
\begin{equation*}
\log |B| \le \frac{N}{4n}\log\frac{N}{4n}
    + \left(\frac{N}{4n} + 4n\right)\log e
   \le \frac{2N}{4n} \log \frac{N}{4n}
    = \frac{N}{2}\left(1 - \frac{\log 4n}{n}\right) \le N/2.
\end{equation*}
But (\ref{eq:B}) yields the lower bound $|B| \ge \binom{N}{k}$,
so by Lemma~\ref{lem:logNchoosek}, $k\le N/4$.
This fact, together with the upper bound on~$|B|$ given by~(\ref{eq:B}),
implies (by Lemma~\ref{lem:largedev})
that $|B|$ is bounded by a constant times $\binom{N}{k+1}$.
Since $|G(B)| \ge \binom{N}{k+1}$, our claim is proved.

So when an additional $O(n)$ gates are allowed,
the number of computable functions is multiplied by
at least some constant factor~$K > 1$.
Now in fact we have $\gamma(n)/2$ additional gates at our disposal,
and $\gamma(n)/2$ outstrips $n\log n$,
so the multiplicative factor is greater than $K^{c\log n}$
for any constant~$c$,
and this eventually grows faster than $n^d$ for any fixed~$d$.
\end{proof}

\section{The main result}
\label{sec:main}

\begin{theorem}
\label{thm:main}
Assume that, for some $\epsilon>0$,
there exists a $2^{n^\epsilon}$-hard pseudorandom number generator~$G$
in $P/poly$.
Then there exists a quasi-polynomial function~$q$
and a nearly-linear-natural
property of density~$\Omega(2^{-q(n)})$
separating $NP$ from $P/poly$.
\end{theorem}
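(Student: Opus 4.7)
The plan is to introduce the \emph{discrimination} property $D = (D_n)$, verify directly that it is nearly-linear-constructive and of quasi-polynomial inverse density, deduce quasi-usefulness against $P/poly$ from Theorem~\ref{thm:RR}, and exhibit an explicit coNP sequence living in $D_n$ whose failure to lie in $P/poly$ yields the separation by closure of $P/poly$ under complementation. I first fix integers $d$ and $e$ with $e > 1 + cd/\epsilon$, so that Theorem~\ref{thm:RR} forbids any $P/poly$-natural property of density greater than $2^{-m^d}$ useful against $SIZE(m^e)$. Writing $n = 2^m$ and $N = 2^n$, I let $D_n$ be the set of $n$-variable Boolean functions $f$ satisfying, when $f$ is regarded as a predicate on $m$-variable truth tables (both of which are bit strings of length $n$): (i)~$f$ accepts more than $2^{2^m - m^d}$ of the $m$-variable truth tables, and (ii)~$f$ rejects every $m$-variable truth table of circuit complexity at most $m^e$. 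For $n$ not a power of two, I reduce to the nearest smaller power via padding; the paper's notion of quasi-usefulness is exactly designed to absorb this technicality.

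The first task is to verify nearly-linear constructivity. Condition~(i) is a count of $1$'s in the length-$N$ truth table of $f$, done in time $O(N)$. For~(ii), Proposition~\ref{prop:shannon} bounds the number of low-complexity $m$-variable truth tables by $\psi(m, m^e) = 2^{O((\log n)^{e+1})} = N^{o(1)}$; enumerating circuits of size at most $m^e$, computing each truth table in $\mathrm{poly}(n)$ time, and reading off the appropriate bit of $f$ altogether costs $N^{o(1)}$ time. Hence $D$ lies in $DTIME(O(N)) \subseteq DTIME(N(\log N)^c)$. For almost-largeness, a uniformly random $f$ satisfies~(i) with probability at least $1/2$ by a Chernoff bound, while~(ii) amounts to $\psi(m, m^e)$ constraints fixing distinct bits of the truth table of $f$ to $0$, jointly satisfied with probability $2^{-\psi(m, m^e)}$. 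So the density of $D_n$ is $\Omega(2^{-q(n)})$ with $q(n) = 2^{O((\log n)^{e+1})}$ quasi-polynomial.

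Next I would argue, by contradiction, that $D$ is quasi-useful against $P/poly$. If some $(f_n) \in P/poly$ has $f_n \in D_n$ for all sufficiently large $n = 2^m$, then $C_m := \{g \in \{0,1\}^n : f_n(g) = 1\}$ defines a $P/poly$-natural property (the $\mathrm{poly}(n)$-sized circuit for $f_n$ is polynomial in the input length $|g| = 2^m$), with density strictly greater than $2^{-m^d}$ by~(i) and useful against $SIZE(m^e)$ by~(ii)---contradicting Theorem~\ref{thm:RR}. For the separation itself, let $h_n$ be defined by $h_n(g) = 1$ precisely when $g$ viewed as an $m$-variable truth table has circuit complexity exceeding $m^e$. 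Then $h_n \in D_n$ for all large $n$: (i)~excludes only the $\psi(m, m^e)$ low-complexity $g$'s, and~(ii) is tautological. Since $(h_n) \in coNP$, quasi-usefulness yields $(h_n) \notin P/poly$; closure of $P/poly$ under complementation then gives $NP \not\subseteq P/poly$.

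The conceptual core, as foreshadowed by the introduction, is the self-defeating inversion behind Theorem~\ref{thm:RR}: the property $D$ is itself nearly-linear-computable, yet is useful against $P/poly$ precisely because every hypothetical $P/poly$ discriminator $f_n$ can be repackaged as a $P/poly$-natural property on $\log n$ variables, which Theorem~\ref{thm:RR} forbids. The main technical obstacle is balancing $d$ and $e$: we need $e > 1 + cd/\epsilon$ for Theorem~\ref{thm:RR}, while $\psi(m, m^e) = 2^{O((\log n)^{e+1})}$ must remain $N^{o(1)}$ for the constructivity check to fit inside $DTIME(N(\log N)^c)$. Since both bounds are satisfied by any fixed constants $d$ and $e$ chosen in terms of $c$ and $\epsilon$ (once $n$ is large enough), the resulting $q$ is quasi-polynomial in $n$ as required.
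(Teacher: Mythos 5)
Your proof is correct and follows essentially the same approach as the paper: you define the discrimination property (the paper's $\gamma$-discrimination with $\gamma(m)=m^e$), verify nearly-linear constructivity and quasi-polynomial inverse density, repackage any hypothetical $P/poly$-computable discriminating family $(f_n)$ as a $P/poly$-natural property on $\log n$ variables to contradict Theorem~\ref{thm:RR}, and exhibit the explicit ``forced-zeros-only'' function as a witness in $NP$ (equivalently $coNP$). The only differences are cosmetic---you parametrize the density threshold by a general $d$ where the paper fixes $d=1$, and your density estimate combines conditions (i) and (ii) a bit informally rather than by the paper's subtractive count---but these are trivially repaired.
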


In fact, as will be apparent from the proof,
the property we exhibit contains functions
that are probably not $NP$-hard,
so our separation is actually stronger than $NP \not\subseteq P/poly$.

The main tool in our proof of Theorem \ref{thm:main}
is the following concept.

\begin{definition}
Given $\gamma:\mathbb{N}\to\mathbb{N}$, we define
a Boolean function $f$ on $n$ variables
to be \emph{$\gamma$-discriminating} if
either of the following two conditions holds:
\begin{enumerate}
\item $n$ is not a power of~$2$.
\item $n = 2^m$ for some $m$ and
  \begin{enumerate}
  \item $f(x) = 1$ for at least $2^n\!/n$ values of
        (the $n$-digit binary string)~$x$, and
  \item $f(x) = 0$ if $x$ is the truth table of a Boolean function
        on $m$ variables that is computable by a Boolean circuit with
        at most $\gamma(m)$ gates.
  \end{enumerate}
\end{enumerate}
\end{definition}

If we let $M^\gamma_n$ be
the set of all $\gamma$-discriminating Boolean functions
on $n$ variables, then $(M^\gamma_n)$
is a Boolean function property
that we shall call \emph{$\gamma$-discrimination.}

The following easy lemma shows that $\gamma$-discrimination
is constructive, and gives a lower bound on its density.

\begin{lemma}
\label{lem:discrim}
Let $\gamma:\mathbb{N}\to \mathbb{N}$
be a time-constructible function
satisfying $\gamma(m) \le 2^m\!/m$ for all~$m$.
Then $\gamma$-discrimination
is a nearly-linear-natural property
with density~$\Omega(2^{-\psigam{\log n}})$.
\end{lemma}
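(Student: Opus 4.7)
The plan is to verify the two requirements defining a nearly-linear-natural property: the density lower bound and constructivity in $DTIME(N(\log N)^c)$, where $N = 2^n$.

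For density, the case when $n$ is not a power of $2$ is trivial since then every function on $n$ variables is $\gamma$-discriminating. For $n = 2^m$, let $B \subseteq \{0,1\}^n$ be the set of truth tables of Boolean functions on $m$ variables computable with at most $\gamma(m)$ gates, so $|B| \le \psi(m,\gamma(m)) = \psigam{\log n}$. A function $f$ on $n$ variables is $\gamma$-discriminating iff $f|_B \equiv 0$ and $f$ has at least $2^n/n$ ones. Fixing $f$ to vanish on $B$ leaves $2^{2^n - |B|}$ completions on $\{0,1\}^n \setminus B$; a uniformly random completion has expected Hamming weight $(2^n - |B|)/2$, which is far above $2^n/n$ because the hypothesis $\gamma(m) \le 2^m/m$ together with Proposition~\ref{prop:shannon} forces $\log|B| \le n - \Omega(n \log\log n /\log n)$. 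A standard Chernoff estimate therefore shows that a constant fraction of completions also meet the weight requirement, yielding $|M_n^\gamma| = \Omega(2^{2^n - |B|})$ and hence the claimed density.

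For constructivity, given the length-$N$ truth table of $f$, the algorithm first tests whether $n$ is a power of $2$ in $O(\log n)$ time, accepting if not. Otherwise it sets $m = \log n$, counts the ones of $f$ in one $O(N)$ pass, and rejects if there are fewer than $2^n/n$. The remaining task is to check that $f(x) = 0$ for every truth table $x$ of a Boolean function on $m$ variables computable with at most $\gamma(m)$ gates. I would do this by precomputing a bit-vector $B[\cdot]$ of length $N$ marking exactly those truth tables, then scanning $f$ once more and rejecting if $f(x) = 1$ and $B[x] = 1$ for some $x$. Time-constructibility of $\gamma$ lets the algorithm produce the budget $\gamma(\log n)$ in $\mathrm{polylog}(N)$ time, and $B[\cdot]$ is built inductively on gate count: start with the truth tables of the $m$ input variables and the two constants, and at each step extend $B[\cdot]$ by forming $u \circ v$ for pairs $u,v$ already in $B[\cdot]$ and each of the sixteen binary operations $\circ$, using the bit-vector itself to prune duplicates.

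The main obstacle will be keeping this precomputation within the nearly-linear budget $O(N(\log N)^c)$. A literal iteration over all circuits costs on the order of $16^{\gamma(m)}(\gamma(m)+m)^{2\gamma(m)}$, which for $\gamma(m)$ close to its maximum $2^m/m$ climbs well above $N$. Proposition~\ref{prop:shannon} limits the number of distinct truth tables to $|B| \le \psigam{\log n} = o(2^n)$, so the pruning-against-$B[\cdot]$ scheme above should bound the actual work by $O(|B|) \cdot \mathrm{poly}(n) = O(N \cdot \mathrm{poly}(n))$ rather than by the total circuit count. Verifying that this dynamic-programming enumeration genuinely reaches every function of circuit complexity at most $\gamma(m)$, in particular handling truth tables whose minimum circuit benefits from internal gate sharing that a strictly formula-style build would miss, is the delicate point; once that combinatorial detail is settled, both halves of the lemma follow.
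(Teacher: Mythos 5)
Your density argument matches the paper's: fix the forced zeros (the truth tables of functions with circuit size at most $\gamma(m)$), count the $2^{2^n-|B|}$ completions, and use a concentration bound to show that almost all completions meet the weight threshold $2^n/n$. You invoke Chernoff while the paper uses the geometric-series tail estimate of Lemma~\ref{lem:largedev}; both are fine.

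The constructivity argument, however, has a genuine gap. The iteration you describe --- start from the inputs and constants, and at each step add all $u\circ v$ with $u,v$ already present --- computes the set of truth tables realizable at circuit \emph{depth} at most $\gamma(m)$, not circuit \emph{size} at most $\gamma(m)$. Writing $S_0$ for the base set and $S_{i+1}=S_i\cup\{u\circ v:u,v\in S_i\}$, one checks by induction that $S_i$ is exactly the set of functions of depth $\le i$, which strictly contains the set of functions of size $\le i$. Your worry about gate sharing is therefore aimed in the wrong direction: the DP never \emph{misses} a function of size $\le\gamma(m)$, since any topological ordering of a size-$g$ circuit places its output in $S_g$; the problem is that it picks up \emph{extra} ones. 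Consequently your algorithm forces too many zeros and decides a strict subset of $M_n^\gamma$. This is fatal: every Boolean function on $m$ variables has depth $O(m)$, so once $\gamma(m)=\Omega(m)$ --- which certainly holds for the polynomial $\gamma(m)=m^e$ used in Theorem~\ref{thm:main} --- your $S_{\gamma(m)}$ is all of $\{0,1\}^n$ and the recognized property is empty. The paper instead brute-forces over all circuits with at most $\gamma(m)$ gates, evaluating each on all $n=2^m$ inputs; that trivially identifies the correct forced-zero set, and the only remaining task is to bound the number of circuits by (roughly) the truth-table length $N$. As a secondary point, your claimed $O(|B|)\cdot\mathrm{poly}(n)$ running time is also too optimistic --- closing under all pairs is $\Theta(|B|^2)$ work --- but the correctness issue is the fundamental one.
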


\begin{proof}
Let $n$ denote the number of variables of our Boolean functions.
If $n$ is not a power of~$2$ then the lemma is trivial,
so assume that $n = 2^m$.

First we note that,
since $\gamma(m) \le 2^m\!/m$,
it is easy to deduce from Proposition~\ref{prop:shannon} that
the number of Boolean circuits with $m$ inputs and at most $\gamma(m)$ gates
is much less than $2^{2^m} = 2^n$.

Let us check constructivity.
To verify that a given truth table is
the truth table of a $\gamma$-discriminating function,
we must check that
the fraction of entries equal to~$1$ is at least $1/n$,
and we must also check that the entries indexed by truth tables
of functions computable by circuits with at most $\gamma(m)$ gates
are~$0$.
Let $N = 2^n$ be the size of the truth table.
Counting~$1$'s clearly takes time that is nearly linear in~$N$,
but to check the forced~$0$'s
we must first compute $\gamma(m)$,
then run through each possible Boolean circuit in turn,
computing its $n$ truth table values,
and checking that the corresponding entry of the given truth table is~$0$.
Since $\gamma$ is time-constructible,
computing $\gamma(m)$ takes time $O(2^m)$,
so evaluating $\gamma$ at $m = \log \log N$
takes time at most polylogarithmic in~$N$.
Enumerating all the circuits is a straightforward process,
and the total number of circuits to be enumerated is at most~$N$,
so the entire computation
takes time at most $N$ multiplied by some factors
that are polylogarithmic in~$N$.

It remains to estimate the density.
If we were to ignore condition 2(a)
in the definition of a $\gamma$-discriminating function,
then we would simply be counting functions that
must be~$0$ in certain positions and are unrestricted otherwise,
so the total number of functions on $n$ variables
would be precisely $2^{2^n - \psigam{m}}$.
\relax From this we can get a lower bound for
the true number of $\gamma$-discriminating functions
by subtracting off the total number of Boolean functions on $n$ variables
whose truth tables have at most $2^n\!/n$ entries equal to~$1$.
This latter quantity is
\begin{equation*}
\sum_{i=0}^{2^n\!/n} \binom{2^n}{i}.
\end{equation*}
By Lemma~\ref{lem:largedev},
\begin{equation*}
\sum_{i=0}^{2^n\!/n} \binom{2^n}{i} = O\left(\binom{2^n}{2^n\!/n}\right)
  = 2^{O(2^n\log n)/n},
\end{equation*}
where the second equality is a routine application of Stirling's approximation.
It follows that for some constant~$c$,
the number of $\gamma$-discriminating functions is at least
\begin{align*}
&2^{2^n - \psigam{m}} - 2^{c(2^n\log n)/n} \\
&\quad = 2^{2^n} 2^{-\psigam{m}}
   (1 - 2^{c(2^n\log n)/n - 2^n + \psigam{m}}).
\end{align*}
Again, $\psigam{m}$ is vanishingly
small compared to $2^{2^m} = 2^n$, so the density is indeed
eventually lower-bounded by a constant times $2^{-\psigam{m}}$.
\end{proof}

We are now ready for the proof of our main result.

\begin{proof}[Proof of Theorem \ref{thm:main}]
By hypothesis, there exists $c\ge1$ such that
the pseudorandom number generator~$G_k$ is in $SIZE(k^c)$.
Choose some number $e>1 + c/\epsilon$,
and let $\gamma$ be the function $\gamma(m) = m^e$.
Then we claim that the desired property is simply $\gamma$-discrimination.

By Lemma~\ref{lem:discrim} we know that $\gamma$-discrimination
is nearly-linear-natural with density $\Omega(2^{-\psigam{\log n}})$.
Since $\gamma$ is a polynomial function, Proposition~\ref{prop:shannon}
implies that this density is indeed $\Omega(2^{-q(n)})$
for some quasi-polynomial~$q$.

We next show that $\gamma$-discrimination is quasi-useful against $P/poly$.
Given $f_n\in M_n^\gamma$, define the property $(C_m)$
by letting a function with truth table~$x$ be in $C_m$
if and only if $f_{2^m}(x) = 1$.
Since $f$ is a $\gamma$-discriminating function,
it follows that $(C_m)$ is useful against $SIZE(m^e)$
and that $(C_m)$ has density $\Omega(2^{-m})$.
Invoking Theorem~\ref{thm:RR} with $d=1$, we see that
$(C_m)$ cannot be $P/poly$-constructive.
In other words, $(f_n)\notin P/poly$,
which means that $\gamma$-discrimination is indeed quasi-useful
against $P/poly$.

Finally, let $(f_n)$ be the sequence of
$\gamma$-discriminating functions that are~$0$
only when forced to be by condition 2(b) and that are~$1$ otherwise.
Then $(f_n)$ is in $NP$,
in the sense that the language $L$ defined by
\begin{equation*}
x \in L \iff f_n(x)=0
\end{equation*}
is in $NP$.\footnote{Some authors might prefer to say that
$(f_n)$ is in co-$NP$, but since we could have chosen to interchange
the roles of $0$ and~$1$ in the definition of $\gamma$-discrimination,
this distinction is of no importance.}
The reason is that, for $n$ a power of~$2$,
a Boolean circuit with truth table~$x$
is a certificate for membership in~$L$,
and such a circuit has size $\gamma(\log n)$,
which is polynomial (even polylogarithmic) in~$n$, the size of~$x$.
This completes the proof.

Note that as we remarked earlier,
$(f_n)$ is almost certainly not $NP$-complete,
so that we have actually separated $P/poly$
from a subclass of~$NP$.
\end{proof}

\section{An unconditional result}
\label{sec:unconditional}

As we remarked in the introduction,
the idea behind the proof of Theorem~\ref{thm:main}
can be adapted to prove a non-uniform version of the result
that has no unproven hypotheses.
Now, it turns out that this unconditional result
can also be proven by a counting argument
that does not use any self-reference.
Since the two arguments are very different in flavor,
we present both of them below.

First we need a non-uniform version of Lemma~\ref{lem:discrim}.

\begin{lemma}
\label{lem:nonunif}
Let $\gamma:\mathbb{N}\to \mathbb{N}$
be a function satisfying $\gamma(m) \le 2^m\!/m$ for all~$m$.
Then $\gamma$-discrimination is a non-uniformly linear-natural property
with density~$\Omega(2^{-\psigam{\log n}})$.
\end{lemma}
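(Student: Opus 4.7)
The plan is to rerun the proof of Lemma~\ref{lem:discrim} with a single adjustment: the enumeration of small circuits, which in the uniform case required time-constructibility of $\gamma$, gets replaced by hardwired non-uniform advice. Everything else carries over unchanged.

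First I would dispose of the density lower bound, since this is where the bulk of the calculation lives. The derivation in Lemma~\ref{lem:discrim} --- counting truth tables that are forced to $0$ on the positions indexed by small circuits, subtracting off the (by Lemma~\ref{lem:largedev} and Stirling) small number of truth tables with fewer than $2^n\!/n$ ones, and observing via Proposition~\ref{prop:shannon} that $\psigam{m}$ is negligible compared to $2^n$ --- uses only the hypothesis $\gamma(m)\le 2^m\!/m$ and nowhere appeals to how $\gamma$ is computed. So it yields density $\Omega(2^{-\psigam{\log n}})$ verbatim.

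Next I would verify non-uniform linear-size constructivity. Write $N = 2^n$ for the input length and $m = \log n$. Condition 2(a), checking that at least $N/n$ entries of the given truth table equal~$1$, is handled by a linear-size binary tree of adders feeding a threshold gate. For condition 2(b) I would build the circuit non-uniformly: hardwire the set $F\subseteq\{0,1\}^n$ of positions that must be forced to~$0$, namely those $x$ that are truth tables of Boolean functions on $m$ variables computable with at most $\gamma(m)$ gates. By Proposition~\ref{prop:shannon}, $|F|\le \psigam{m}$, which is far less than $N$, and the test reduces to a single OR over the $|F|$ input wires at these hardwired positions, accepting iff that OR is~$0$. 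AND-ing the two sub-circuits yields a circuit of size $O(N)$ deciding $\gamma$-discrimination on inputs of length~$N$.

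The only thing resembling an obstacle is making precise what ``non-uniformly linear-natural'' is supposed to mean. Interpreted as constructivity by a $SIZE(N)$ circuit family (the non-uniform analogue of the $DTIME(N(\log N)^c)$ definition given earlier), the construction above suffices, and crucially the list $F$ need never be uniformly computable --- this is exactly the feature that removes the time-constructibility hypothesis present in Lemma~\ref{lem:discrim}.
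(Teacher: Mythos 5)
Your proof is correct and follows essentially the same route as the paper: reuse the density calculation from Lemma~\ref{lem:discrim} verbatim, count the $1$'s with an $O(N)$-gate adder circuit (the paper cites carry-save addition; your tree-of-adders also costs $O(N)$ since $\sum_i i/2^i$ converges), and check the forced-$0$ positions by hardwiring the set $F$ into the circuit as non-uniform advice, which is exactly what dispenses with the time-constructibility hypothesis.
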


When we say ``non-uniformly linear-natural property,''
we of course mean that membership can be decided by
circuits whose size is linear in the size of the truth table.

\begin{proof}
The proof is the same as the proof of Lemma~\ref{lem:discrim}
except when it comes to $\Gamma$-constructivity.

Let $n = 2^m$ denote the number of variables of our Boolean functions.
As we said before, to verify that a given truth table is
the truth table of a $\gamma$-discriminating function,
we must check that
the fraction of entries equal to~$1$ is at least $1/n$,
and we must also check that the entries indexed by truth tables
of functions computable by circuits with at most $\gamma(m)$ gates
are~$0$.
Let $N = 2^n$ be the size of the truth table.
We can count the number of~$1$'s using $O(N)$ gates,
for example by using carry-save addition~\cite{Paterson-Pippenger-Zwick}.
Also, for each $n$, the set of truth table entries that must be~$0$
is fixed, so this condition can be checked using a number of gates
that is proportional to the number of forced~$0$'s
(even if $\gamma$ is not time-constructible);
this number is certainly $O(N)$.
\end{proof}

\begin{theorem}
\label{thm:nonunif}
Let $\gamma, \lambda : \mathbb{N} \to \mathbb{N}$
be functions such that
$\gamma$ outstrips $\lambda$
and such that
$m \le \gamma(m) \le 2^m\!/m$ for all~$m$.
Let $\Gamma = SIZE(\gamma)$ and
let $\Lambda = SIZE(\lambda)$.
Then there exists a $\Gamma$-natural property~$(C_n)$
with density $\Omega(2^{-\psigam{\log n}})$
that is quasi-useful against $\Lambda$.
\end{theorem}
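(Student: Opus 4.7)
The natural choice for $(C_n)$ is $\gamma$-discrimination itself, namely $(C_n) = (M_n^\gamma)$. The density and $\Gamma$-constructivity requirements are then delivered almost immediately by Lemma~\ref{lem:nonunif}, which certifies that $\gamma$-discrimination is non-uniformly linear-natural (i.e., decidable by circuits of size $O(N)$ on $N = 2^n$-bit truth-table inputs) with density $\Omega(2^{-\psi(\log n,\gamma(\log n))})$. The hypothesis $\gamma(m) \ge m$, applied at $m = N$, gives $\gamma(N) \ge N$, so the $O(N)$-gate constructivity from the lemma fits comfortably inside $\Gamma = SIZE(\gamma)$ when $\gamma$ is interpreted at the truth-table input scale.

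The substantive work is showing quasi-usefulness against $\Lambda = SIZE(\lambda)$. I would argue by contradiction, adapting the self-reference idea from the proof of Theorem~\ref{thm:main}. Assume $(f_n) \in SIZE(\lambda)$ and $f_n \in M_n^\gamma$ for all sufficiently large $n = 2^m$. Define an auxiliary property $(C'_m)$ on $m$-variable functions by $g \in C'_m \iff f_{2^m}(x_g) = 1$. Then conditions 2(a) and 2(b) of $\gamma$-discrimination for $f_{2^m}$ translate respectively into (i) $|C'_m|/2^{2^m} \ge 1/2^m$ and (ii) $(C'_m)$ is useful against $SIZE(\gamma(m))$; while the $SIZE(\lambda)$-bound on $f_{2^m}$ supplies (iii) a characteristic circuit for $(C'_m)$ of size $O(\lambda(2^m))$ on $2^m$-bit inputs.

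The main obstacle is to convert (i)--(iii) into an unconditional contradiction without invoking Theorem~\ref{thm:RR}, which was available in the conditional setting only by hypothesizing a pseudorandom number generator. My plan is to exploit the $\gamma$-outstrips-$\lambda$ assumption directly: combine the density guarantee (i) with the compact characteristic circuit (iii) to locate, non-uniformly, a specific witness $g^* \in C'_m$, and then use the small characteristic circuit as a compressed description that yields an $m$-variable circuit for $g^*$ of size at most $\gamma(m)$, contradicting the usefulness (ii). It is the strict domination $\gamma(m) \gg \lambda(m)$ supplied by outstripping that has to power the final step, and arranging the size arithmetic to close at the correct scale is the technical crux. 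An alternative I would keep in reserve — and which the paper hints is independently available — is to sidestep self-reference altogether and argue the quasi-usefulness by a direct counting comparison between $|M_n^\gamma|$ and $\psi(n,\lambda(n))$, powered by the isoperimetric lower bound in Proposition~\ref{prop:psibound}; this is the route taken for Theorem~\ref{thm:counting}.
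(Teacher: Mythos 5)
Your setup of $(C'_m)$ from a sequence $f_n \in M_n^\gamma$, and your checks of density and $\Gamma$-constructivity via Lemma~\ref{lem:nonunif}, all match the paper. But the core of the argument --- how to conclude $(f_n) \notin \Lambda$ --- is where your plan diverges, and I think the gap is genuine.

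You set out to derive an \emph{object-level} contradiction: assume $(f_n) \in SIZE(\lambda)$ with $f_n \in M_n^\gamma$, and then produce from the small characteristic circuit for $(C'_m)$ a circuit of size at most $\gamma(m)$ for some witness $g^* \in C'_m$, contradicting condition~2(b). This ``compression'' step does not close. The characteristic circuit has size $O(\lambda(2^m))$ on $2^m$-bit inputs, whereas the target is a circuit of size at most $\gamma(m)$ on $m$-bit inputs; since typically $\lambda(2^m) \gg \gamma(m)$ (e.g.\ $\lambda(n) = n$, $\gamma(m) = m^2$), there is no way to amortize the description. More fundamentally, a small characteristic circuit for a \emph{large} set $C'_m$ carries no information about the circuit complexity of any individual member of the set --- the trivial all-ones characteristic function is witness to that. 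Nothing in (i)--(iii) forces the existence of an easy function inside $C'_m$; indeed condition~2(b) is consistent with every member of $C'_m$ being hard. So no contradiction is reachable by this route. Your reserve option --- a counting comparison between $|M_n^\gamma|$ and $\psi(n,\lambda(n))$ --- has a different defect: counting can show that \emph{some} discriminating functions are hard, but quasi-usefulness requires that \emph{every} (eventual) sequence of discriminating functions be hard, and $M_n^\gamma$ has no a~priori reason to exclude easy functions. (That is exactly why Theorem~\ref{thm:counting} abandons discrimination and builds a fresh property from Hamming balls.)

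The paper's actual proof is a reductio at the \emph{meta level}: assume, toward contradiction, that \emph{the theorem fails} --- that there is \emph{no} $\Gamma$-natural property with density $\Omega(2^{-\psigam{\log m}})$ quasi-useful against $\Lambda$. This reductio hypothesis is then used \emph{inside} the proof of quasi-usefulness of discrimination: given $f_n \in M_n^\gamma$, the induced $(C'_m)$ is (by conditions 2(a), 2(b) and $\gamma$-outstrips-$\lambda$) useful against $\Lambda$ with the right density, so by the reductio hypothesis $(C'_m)$ cannot be $\Gamma$-constructive; hence $(f_n) \notin \Gamma \supseteq \Lambda$. That establishes quasi-usefulness of discrimination, which together with Lemma~\ref{lem:nonunif} contradicts the reductio hypothesis. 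This self-referential loop, where the nonexistence of good properties is itself the engine that makes discrimination good, is precisely what replaces Theorem~\ref{thm:RR} in the conditional setting --- and it is the idea your proposal is missing.
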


\begin{proof}
We argue by contradiction.
Assume, as a reductio hypothesis, that
there is no $\Gamma$-natural property~$(C_m)$
with density $\Omega(2^{-\psigam{\log m}})$
that is quasi-useful against $\Lambda$.
Then we claim that $\gamma$-discrimination is
quasi-useful against~$\Lambda$.

To see this, pick an arbitrary sequence of functions
$f_n \in M^\gamma_n$.
Define a property $(C_m)$ by letting a function of $m$ variables
with truth table~$x$ be in~$C_m$ if and only if $f_{2^m}(x)=1$.
Then by condition~2(a)
in the definition of a $\gamma$-discriminating function,
$(C_m)$ has density $\Omega(2^{-m})$.
By assumption, $\gamma(\log m) \ge \log m$, and
it is easy to see that there are more than $m$ distinct Boolean
functions computable with $\log m$ gates and $\log m$ inputs,
so the density of~$(C_m)$ is $\Omega(2^{-\psigam{\log m}})$.
By condition~2(b),
if $g_m \in C_m$ is any sequence of Boolean functions,
then the minimum circuit size of~$g_m$ exceeds~$\gamma(m)$,
and hence $(g_m)\notin\Lambda$ since $\gamma$ outstrips~$\lambda$.
In other words, $(C_m)$ is quasi-useful
(in fact, useful) against~$\Lambda$.
Therefore, by our reductio hypothesis,
membership in $(C_m)$ is not $\Gamma$-computable.
It follows that $(f_n)\notin\Gamma$,
and a fortiori $(f_n)\notin\Lambda$.
Therefore $(f_n)$ is quasi-useful against~$\Lambda$, as claimed.

But since $n \le \gamma(n) \le 2^n\!/n$,
Lemma~\ref{lem:nonunif} tells us that $(M^\lambda_n)$
is $\Gamma$-natural with density $\Omega(2^{-\psigam{\log n}})$.
Combined with the quasi-usefulness against~$\Lambda$ that we just proved,
this fact contradicts our reductio hypothesis,
so the theorem is proved.
\end{proof}

Observe that a curious feature of the above proof is that
it is highly ineffective.
The natural property whose existence is asserted is
not explicitly exhibited,
nor can an explicit example be extracted from the proof,
which is intrinsically a proof by contradiction.
Note also that a $SIZE(\gamma)$-natural property
is not necessarily ``constructive'' in the intuitive sense
even if $\gamma$ is polynomial, because $SIZE(\gamma)$
is a \emph{non-uniform} complexity class.
Nevertheless, we feel that Theorem~\ref{thm:nonunif}
remains of some interest because
it is an unconditional result.

Next we present the promised counting argument,
which in fact yields a stronger result than Theorem~\ref{thm:nonunif}.

\begin{theorem}
\label{thm:counting}
Let $\gamma, \lambda : \mathbb{N} \to \mathbb{N}$
be functions such that $\lambda(n) = \Omega(n\log n)$,
$\gamma$ outstrips $\lambda$,
and $\gamma(n) \le 2^{n-2}\!/n$ for all~$n$.
Let $\Lambda = SIZE(\lambda)$.
Then there exists a non-uniformly linear-natural property
with density at least $1/\psigam{n}$
that is useful against~$\Lambda$.
\end{theorem}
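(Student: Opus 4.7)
My plan is a direct projection-based counting argument. Let $B_n := \{f : \mathrm{size}(f) \le \gamma(n)\}$, so $|B_n| = \psigam{n}$, and write $N := 2^n$. First, any $C_n \subseteq \{0,1\}^N \setminus B_n$ is automatically useful against $\Lambda = SIZE(\lambda)$: if $(f_n) \in \Lambda$, then $\mathrm{size}(f_n) \le c\lambda(n) < \gamma(n)$ for all large $n$ (since $\gamma$ outstrips $\lambda$), so $f_n \in B_n$ eventually and hence $f_n \notin C_n$. Thus it suffices to exhibit $C_n$ disjoint from $B_n$ with density $\ge 1/\psigam{n}$ that admits a linear-size non-uniform recognizer.

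I would take $C_n$ to be a fiber of a bit-projection. Set $b_n := \lfloor \log \psigam{n} \rfloor$, fix a set $T_n \subseteq [N]$ of $b_n$ truth-table positions, and let $\pi : \{0,1\}^N \to \{0,1\}^{b_n}$ be projection onto those coordinates. If one can find $v_n \in \{0,1\}^{b_n} \setminus \pi(B_n)$, then $C_n := \pi^{-1}(v_n)$ has density $2^{-b_n} \ge 1/\psigam{n}$ and is disjoint from $B_n$ by construction. The recognizing circuit reads the $b_n$ bits of the truth table at positions $T_n$, XORs each with the corresponding bit of $v_n$ (hardwired as non-uniform advice), and ORs the results, using $O(b_n)$ gates. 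Since $\gamma(n) \le 2^{n-2}/n$, Proposition \ref{prop:shannon} gives $\log \psigam{n} = O(2^n) = O(N)$, so this circuit is linear in $N$ as required.

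The main nontrivial step is finding a missed value $v_n$: the trivial pigeonhole bound $|\pi(B_n)| \le |B_n| = \psigam{n}$ only guarantees existence when $2^{b_n} > \psigam{n}$, which would force $b_n > \log \psigam{n}$ and spoil the density. I would close this gap by exploiting structural collisions inside $B_n$ under $\pi$. Since $\gamma(n) \ge \lambda(n) = \Omega(n \log n)$, the set $B_n$ contains the constant $0^N$, every single-point indicator $u_j(x) = [x{=}j]$ (complexity $O(n)$), and more generally every low-weight indicator $v_S(x) = [x \in S]$ for $|S|$ small (complexity $O(n|S|)$). Any such $v_S$ with $S \subseteq [N] \setminus T_n$ collapses under $\pi$ to $0^{b_n}$; choosing $T_n$ appropriately, a large family of these indicators collapses onto a single projection value, forcing $|\pi(B_n)| < 2^{b_n}$ strictly.

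The main obstacle, unsurprisingly, is the quantitative matching between the number of ``cheap'' collisions one can harvest (controlled by the complexity budget $\gamma(n)$) and the surplus $\psigam{n} - 2^{b_n}$ that must be absorbed to drive $|\pi(B_n)|$ below $2^{b_n}$. Propositions \ref{prop:shannon} and \ref{prop:psibound}, together with the assumption $\gamma(n) \le 2^{n-2}/n$, provide the tools for this comparison: the upper bound $\log \psigam{n} < N/2$ that follows from Shannon, combined with the lower bound on $\psigam{n}$ in terms of $\psi(n,\gamma(n)/2)$, ensures that the collision-harvest is strictly larger than the surplus for all sufficiently large $n$, completing the proof.
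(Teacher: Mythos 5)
Your overall framework — find a structured set $C_n$ disjoint from the cheap functions, with the structure chosen so a linear-size circuit can recognize it — matches the paper's, and the projection trick is genuinely different from the paper's Hamming-ball construction (and, if done right, arguably cleaner: fibers of a coordinate projection have size exactly $2^{N-b_n}$, so you never face the rounding issue that forces the paper into Lemma~\ref{lem:largedev} and the binomial-ratio estimate). But there is a concrete quantitative gap in the way you set it up, and the collision argument you sketch to close it cannot work.

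The problem is your choice of $B_n = \{f : \mathrm{size}(f) \le \gamma(n)\}$, with $|B_n| = \psigam{n}$, projected onto $b_n = \lfloor \log \psigam{n}\rfloor$ coordinates. As you note, $|B_n| \ge 2^{b_n}$, so pigeonhole alone does not yield a missed value. You then propose to recover by showing that enough elements of $B_n$ collide at $0^{b_n}$. But the surplus $|B_n| - 2^{b_n}$ can be as large as $\psigam{n}/2$, whereas the collisions you harvest are low-weight indicators $v_S$ with $|S| \lesssim \gamma(n)/n$ (so that $\mathrm{size}(v_S) = O(n|S|) \le \gamma(n)$). There are roughly $\binom{N}{\gamma(n)/n}$ of these, with logarithm $\Theta(\gamma(n)\log n / n)$, while $\log\bigl(\psigam{n}/2\bigr) = \Theta(\gamma(n)\log\gamma(n))$ by Proposition~\ref{prop:shannon}. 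Since $\gamma(n) = \Omega(n\log n)$, the needed surplus is exponentially larger (in the exponent) than the collision harvest, by a factor of roughly $n$. Neither Proposition~\ref{prop:shannon} nor Proposition~\ref{prop:psibound} can rescue this; you are off by more than a polynomial.

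The fix is exactly the move the paper makes, transplanted into your projection language: take the \emph{cheaper} set $G_n = \{f : \mathrm{size}(f) \le \gamma(n)/2\}$ as the set to avoid, keep $b_n = \lfloor\log\psigam{n}\rfloor$, and invoke Proposition~\ref{prop:psibound} to get $|G_n| = \psi(n,\gamma(n)/2) \le \psigam{n}/n^2 < \psigam{n}/2 \le 2^{b_n}$ for large $n$. Now pigeonhole immediately gives a value $v_n \notin \pi(G_n)$, and $C_n := \pi^{-1}(v_n)$ has density $2^{-b_n} \ge 1/\psigam{n}$ exactly as required. Usefulness against $\Lambda$ still holds because $\gamma/2$ also outstrips $\lambda$, and your linear-size recognizer (hardwire $T_n$ and $v_n$, XOR and OR) goes through verbatim since $b_n = O(\gamma(n)\log\gamma(n)) = O(N)$. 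With this change your argument is correct, and in fact slightly shorter than the paper's, since you avoid the binomial-tail estimates entirely; Proposition~\ref{prop:psibound} does all the work.
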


\begin{proof}
Let us first give a somewhat informal proof
that conveys the essential idea.  Let $N=2^n$.

As usual, think of Boolean functions on $n$ variables
as represented by their truth tables.
Let $G_n$ be the set of Boolean functions on $n$ variables computable
by circuits of size~$\gamma(n)/2$.
For each $g\in G_n$,
imagine a Hamming ball of volume $2^N\!/\psigam{n}$
centered at~$g$
(by a \emph{Hamming ball centered at~$g$} we mean the set of
all Boolean functions within a certain Hamming distance from~$g$).
There are $\psi(n, \gamma(n)/2) < \psigam{n}$ such balls,
so the total volume of these balls is less than $2^N$.
Therefore there must exist a function~$f_n$ outside all of these balls.
It follows that there is a Hamming ball~$B_n$ of volume
$2^N\!/\psigam{n}$ around~$f_n$
that is disjoint from~$G_n$.
Then since $\gamma$ outstrips $\lambda$,
$(B_n)$ is a property that is useful against~$\Lambda$.
Its density is $1/\psigam{n}$.
Moreover, testing for membership in $B_n$ amounts to
computing Hamming distance from~$f_n$,
which can be done with circuits of linear size.

This completes the informal proof.
The only point that is not entirely rigorous is
the assumption that there exists a Hamming ball
whose volume is exactly $2^N\!/\psigam{n}$;
this may not be true because
the volume of a Hamming ball is necessarily
a sum of consecutive binomial coefficients.
For a rigorous argument, we choose our Hamming balls to have radius~$r$,
where $r$ is chosen so that
\begin{equation}
\label{eq:hamming}
\sum_{i=0}^{r-1} \binom{N}{i} < \frac{2^N}{\psi_n}
  \le \sum_{i=0}^r \binom{N}{i},
\end{equation}
where we have abbreviated $\psigam{n}$ to $\psi_n$ to ease notation.
Then the property of being in~$B_n$ certainly
has density at least $1/\psi_n$,
so all that needs to be checked is that $\psi(n, \gamma(n)/2)$
such Hamming balls have total volume strictly less than~$2^N$,
i.e., that
\begin{equation}
\label{eq:totalvolume}
\psi(n, \gamma(n)/2) \sum_{i=0}^r \binom{N}{r} < 2^N.
\end{equation}
To prove this,
observe that
we just need to show that the ratio
$\biggl(\sum_{i=0}^r \binom{N}{i}\biggr)\bigg/
 \biggl(\sum_{i=0}^{r-1} \binom{N}{i}\biggr)$
is bounded by a polynomial function of~$n$,
because then (\ref{eq:totalvolume})
will follow from (\ref{eq:hamming}) and
Proposition~\ref{prop:psibound}.  Now
\begin{equation*}
\frac{\sum_{i=0}^r \binom{N}{i}}{\sum_{i=0}^{r-1} \binom{N}{i}}
 =  \frac{\binom{N}{r}} {\sum_{i=0}^{r-1} \binom{N}{i}} + 1
 \le \frac{\binom{N}{r}}{\binom{N}{r-1}} + 1
 = \frac{N+1}{r}.
\end{equation*}
So we are reduced to showing that $(N+1)/r$ is bounded by a
polynomial function of~$n$.
To prove this, remember that by assumption $\gamma(n) < N/n$,
so Proposition~\ref{prop:shannon} implies that
$\psi_n \le (N/n)^{N/n}e^{N/n+4n}$.
Taking logarithms and dividing by~$N$, we deduce that
\begin{equation*}
\frac{\log \psi_n}{N}
 \le \frac{1}{n} \log\frac{N}{n} + \left(\frac{1}{n}+\frac{4n}{N}\right)\log e
  = 1 - \frac{\log n}{n} + \left(\frac{1}{n} + \frac{4n}{N}\right)\log e.
\end{equation*}
The $(\log n)/n$ term in this expression dominates, so for large~$n$,
\begin{equation}
\label{eq:largen}
1 - \frac{\log \psi_n}{N} \ge \frac{\log n}{2n}.
\end{equation}
On the other hand, from~(\ref{eq:hamming}) we have
\begin{equation*}
\frac{2^N}{\psi_n} \le \sum_{i=0}^r \binom{N}{i}
   \le \sum_{i=0}^r N^i = \frac{N^{r+1} - 1}{N-1} \le N^{r+1}.
\end{equation*}
Taking logarithms, we get $N - \log \psi_n \le (r+1)\log N$,
which combined with (\ref{eq:largen}) implies that for large~$n$,
\begin{equation*}
\frac{r+1}{N} \ge \frac{1 - (\log \psi_n)/N}{\log N}
    \ge \frac{\log n}{2n^2}.
\end{equation*}
We are now done, because $N/(r+1)$ and $(N+1)/r$
are within a constant factor of each other.
\end{proof}

\section{Vadhan's variation}
\label{sec:vadhan}

After my talk at FOCS 2008, Salil Vadhan showed me a
different way to prove the main result of this paper,
assuming only that SAT is hard
(and therefore, of course, separating $P/poly$ from $NP$
but not from a subclass of~$NP$).
With his kind permission,
I include his argument here.

\begin{theorem}
Assume that SAT is not computable by circuits of size $2^{n^\epsilon}$.
Let $\gamma$ be a function such that $\gamma(n)$ outstrips
$(\log n)^{1/\epsilon}$.
and let $q(n) = 2^{\gamma(n)}$.
Then there exists a sublinear-natural property of density~$2^{-q(n)}$
that separates $NP$ from $P/poly$.
\end{theorem}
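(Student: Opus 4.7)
The plan is to construct $C_n$ by pinning down $q(n) = 2^{\gamma(n)}$ entries of each length-$N$ truth table (where $N = 2^n$) to agree with the truth table of SAT on inputs of length $\gamma(n)$, leaving the remaining $N - q(n)$ entries free. Concretely, fix for each $n$ an injection $\iota_n : \{0,1\}^{\gamma(n)} \to \{0,1\}^n$ — the simplest choice is $\iota_n(z) = z\,0^{n-\gamma(n)}$ — and set
\[
C_n := \bigl\{\, f : \{0,1\}^n \to \{0,1\} \bigm| f(\iota_n(z)) = \mathrm{SAT}_{\gamma(n)}(z) \text{ for every } z \in \{0,1\}^{\gamma(n)}\,\bigr\}.
\]
Since exactly $q(n)$ positions are pinned and the others are free, $|C_n|/2^{2^n} = 2^{-q(n)}$, so the density is exactly as required.

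Next I would verify sublinear-time constructivity. Given the truth table of $f$ as input, the algorithm first precomputes the truth table of $\mathrm{SAT}_{\gamma(n)}$ by brute force: enumerate the $2^{\gamma(n)}$ possible formula encodings and for each try the at most $2^{\gamma(n)}$ assignments, costing $2^{O(\gamma(n))}$ time. Then it reads the $2^{\gamma(n)}$ entries of the input truth table indexed by $\iota_n(\{0,1\}^{\gamma(n)})$ and checks agreement. The total running time is $2^{O(\gamma(n))}$, which is $o(N)$ in the regime $\gamma(n) = o(n)$ — the only regime in which ``sublinear-natural of density $2^{-q(n)}$'' is interesting, and which is certainly compatible with the hypothesis that $\gamma$ merely outstrips $(\log n)^{1/\epsilon}$.

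Third, I would establish usefulness against $P/poly$ and produce an $NP$ function in $C_n$. Suppose for contradiction that $(f_n) \in P/poly$ with $f_n \in C_n$ for infinitely many $n$. Hardcoding the constants $\iota_n(z)$ into a polynomial-size circuit for $f_n$ yields a circuit of size $\mathrm{poly}(n)$ computing $\mathrm{SAT}_{\gamma(n)}$. The outstripping condition says that for every constant $c$, $\gamma(n) > c(\log n)^{1/\epsilon}$ eventually, so $2^{\gamma(n)^\epsilon} > n^{c^\epsilon}$ eventually; hence $2^{\gamma(n)^\epsilon}$ dominates every polynomial in $n$. Thus for infinitely many $m = \gamma(n)$ we obtain a circuit of size $o(2^{m^\epsilon})$ for $\mathrm{SAT}_m$, contradicting the hypothesis. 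For the $NP$ witness, take $\iota_n$ to be the projection onto the first $\gamma(n)$ coordinates and let $f_n(x) := \mathrm{SAT}_{\gamma(n)}(x_1,\ldots,x_{\gamma(n)})$; this is in $NP$ (ignore the last $n-\gamma(n)$ bits and solve SAT on the rest) and lies in $C_n$ by construction, so $C_n$ separates $NP$ from $P/poly$.

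The only real obstacle is the arithmetic of step three: we must verify that a $\mathrm{poly}(n)$ upper bound for $\mathrm{SAT}_{\gamma(n)}$ contradicts the $> 2^{m^\epsilon}$ lower bound at $m = \gamma(n)$. The outstripping hypothesis on $\gamma$ is calibrated exactly for this, so once the definitions are lined up the contradiction is forced. Everything else — the density count, the $NP$ certificate, and the brute-force time bound — is a direct calculation.
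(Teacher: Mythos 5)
Your proposal is correct and is essentially identical to the paper's proof (due to Vadhan): you pin the $2^{\gamma(n)}$ truth-table entries indexed by strings of the form $z\,0^{n-\gamma(n)}$ to agree with $\mathrm{SAT}_{\gamma(n)}(z)$, compute the density as $2^{-2^{\gamma(n)}}=2^{-q(n)}$, check membership in time $2^{O(\gamma(n))}$, and derive usefulness by hardcoding the trailing zeros into a hypothetical $P/poly$ circuit to get a small circuit for $\mathrm{SAT}_{\gamma(n)}$, which the outstripping condition forbids. The only nit is a small notational slip in the last step, where you ``take $\iota_n$ to be the projection'' after having already defined $\iota_n$ as an injection $\{0,1\}^{\gamma(n)}\to\{0,1\}^n$; what you mean (and what is correct) is to take the witness $f_n(x)=\mathrm{SAT}_{\gamma(n)}(x_1,\dots,x_{\gamma(n)})$, which lies in $C_n$ under the original $\iota_n$ and is manifestly an $NP$ function.
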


\begin{proof}
To ease notation, let $m = \gamma(n)$.
Fix some way of encoding SAT instances as binary strings.
Let $C_n$ comprise all Boolean functions~$f$ of $n$ variables
with the following property.
If the last $n-m$ bits of~$x$ are all zero,
then $f(x)$ is $1$ or~$0$ according to whether or not
the first $m$ bits of~$x$ encode a satisfiable instance of SAT.
(If any of the last $n-m$ bits of~$x$ are nonzero,
then $f(x)$ can be anything.)
Then $C_n$ has density $1/2^{2^m} = 2^{-q(n)}$.
By our assumption on the hardness of SAT,
functions in $C_n$ cannot be computed by circuits of size $2^{m^\epsilon}$.
Since $m^\epsilon$ grows faster than $d\log n$ for any fixed~$d$,
this shows that $(C_n)$ is useful against $P/poly$.
Checking membership in~$C_n$ can be done in time $poly(m) \cdot 2^m$,
which is certainly sublinear in~$2^n$.
\end{proof}

\section{Final remarks}

It is natural to ask if our results give any new hope
for proving strong circuit lower bounds.\footnote{For
a survey of other possible approaches to breaking
the naturalization barrier, see~\cite{Allender}.}
It is probably difficult to prove unconditionally that,
say, $n^{\log n}$-discrimination is useful against
a strong complexity class~$\Lambda$,
not only because that would separate $NP$ from~$\Lambda$,
but also because $\gamma$-discrimination is closely
related to the circuit minimization problem,
whose complexity is known to be difficult to get a handle on;
see~\cite{Kabanets-Cai}.

However, even as a \emph{potential} candidate
for an almost-natural proof of $NP \not\subseteq P/poly$,
$\gamma$-discrimination has an illuminating feature.
Namely, the only thing that prevents a $\gamma$-discriminating
function from looking like a random function is
the presence of certain forced~$0$'s in the truth table.
Moreover, the proportion of forced~$0$'s goes to zero
fairly rapidly as $n$ goes to infinity.
This illustrates the fact that largeness can be destroyed
by imposing what seems intuitively to be a relatively small
amount of ``structure'' on a random function.
Therefore, the intuition that there is some constructive property
of random functions that suffices to prove strong circuit lower bounds
is not completely destroyed by the Razborov--Rudich results;
a minor alteration of a random property may still work.

It is also worth noting that existing circuit lower bound proofs
might still be mined for ideas to break the naturalization barrier.
Some linear lower bounds, such as those of Blum~\cite{Blum} and
Lachish and Raz~\cite{Lachish-Raz}, do not relativize
and are not known to naturalize.
Even proofs that are known to naturalize are not necessarily devoid
of useful ideas.
For example, in the course of analyzing a proof by Smolensky,
Razborov and Rudich identify three properties
$C_1 \subseteq C_2 \subseteq C_3$ that are implicit in the proof,
and show that $C_2$, and a~fortiori~$C_3$, are natural.
However, $C_1$ is constructive but not known to be large,
so it is conceivable (though admittedly unlikely) that
$C_1$ is only \emph{almost} large and is actually useful.
Of course, one would still need to identify and use
some feature of~$C_1$ that is not shared by~$C_2$
in order to prove a stronger circuit lower bound than Smolensky's,
but the point is that the usefulness of~$C_1$ is not \emph{automatically}
ruled out by the fact that Smolensky's argument naturalizes.
In theory, it could still be fruitful to study~$C_1$.

Finally, recall that as evidence that largeness is hard to circumvent,
Razborov and Rudich showed that any formal complexity measure
automatically yields a large property.
Knowing that almost-natural proofs exist,
we could perhaps try to come up with something that is
almost, but not quite, a formal complexity measure.
Unfortunately, as of now, this tempting idea remains purely speculative.

\section{Acknowledgments}

I would like to thank Bob Beals, Steve Boyack, Sandy Kutin,
and Avi Wigderson for helpful discussions and encouragement.



\end{document}